\let\origsection\section
\renewcommand\section{\@ifstar{\starsection}{\nostarsection}}
\newcommand\nostarsection[1]
\sectionprelude\origsection{#1}\sectionpostlude}
\newcommand\starsection[1]
\newcommand\sectionprelude{%
  \vspace{-2pt}
}
\newcommand\sectionpostlude{%
  \vspace{-3pt}
}
\let\origsubsection\subsection
\renewcommand\subsection{\@ifstar{\starsubsection}{\nostarsubsection}}
\newcommand\nostarsubsection[1]
\subsectionprelude\origsubsection{#1}\subsectionpostlude}
\newcommand\starsubsection[1]
\newcommand\subsectionprelude{%
  \vspace{-2pt}
}
\newcommand\subsectionpostlude{%
  \vspace{-3pt}
}
\let\origparagraph\paragraph
\renewcommand\paragraph[1]
\paragraphprelude\origparagraph{#1}\paragraphpostlude}
\newcommand\paragraphprelude{%
  \vspace{-2pt}
}
\newcommand\paragraphpostlude{%
  \vspace{-3pt}
}
\newtheorem{definition}{Definition}
\newtheorem{proposition}{Proposition}
\newtheorem{lemma}{Lemma}
\newcommand{\Kr}{\textsc{Kr}\xspace}
\title{Byzantine-Tolerant Machine Learning}
\author{Peva Blanchard \and El Mahdi El Mhamdi
\and Rachid Guerraoui \and Julien Stainer}
\date{École Polytechnique Fédérale de Lausanne\\\texttt{first.last@epfl.ch}}
\begin{document}
\nocite{*}
\pagenumbering{gobble}

\maketitle

\begin{abstract}
 The growth of data, the need for scalability and the complexity of models used in modern machine learning calls for distributed implementations. 
 Yet, as of today, distributed machine learning frameworks have largely ignored the possibility of 
 arbitrary (i.e., Byzantine) failures.
 In this paper, we study the robustness to Byzantine failures at the fundamental level of stochastic gradient descent (SGD), 
 the heart of most machine learning algorithms.
 Assuming a set of $n$ workers, up to $f$ of them being Byzantine,
 we ask how robust can SGD be, 
 without limiting the dimension, nor the size of the parameter space.

 We first show that no gradient descent update rule based on a linear combination of the vectors proposed by the workers  
 (i.e, current approaches) tolerates a single Byzantine failure.
 We then formulate a resilience property of the update rule capturing 
 the basic requirements to guarantee convergence despite $f$   Byzantine workers. %($2f+2<n$).
 We finally propose \emph{Krum}, an update rule that satisfies the resilience property aforementioned. 
 For a $d$-dimensional learning problem, the time complexity of Krum is $O(n^2 \cdot (d + \log n))$.
 
\end{abstract}

\vspace{2cm}

\begin{center}

% \Large Regular and Student Submission

\end{center}

\vspace{2cm}

\bigskip
\bigskip

% \begin{tabular}{l|l}
%  Contact author &  El Mahdi El Mhamdi \\
%   & \\
%   & elmahdi.elmhamdi@epfl.ch \\
%   & 	+41 21 69 35267\\
%   & Ecole Polytechnique Fédérale de Lausanne \\
%   & EPFL IC LPD\\
%   & Bâtiment INR 210 \\
%   & 1015 Lausanne, Switzerland
%  \end{tabular}

\clearpage
\pagenumbering{arabic}
\setcounter{page}{1}

\section{Introduction}

Machine learning has received a lot of attention over the past few years. 
Its applications range all the way from images classification, financial trend prediction, disease diagnosis, to gaming and driving~\cite{jordan2015machine}. 
Most major companies are currently investing in machine learning technologies to support their businesses~\cite{forbes2017}.
Roughly speaking, machine learning consists in giving a computer the ability to improve the way it solves a problem with the quantity and quality of information it can use~\cite{samuel1959some}.
In short, the computer has a list of internal parameters, called the \emph{parameter vector},  
which allows the computer to formulate answers to several questions such as, ``is there a cat on this picture?''.
According to how many correct and incorrect answers are provided, a specific error cost is associated with the parameter vector.
\textit{Learning} is the process of updating this parameter vector in order to minimize the cost.

The increasing amount of data involved~\cite{dean2012large} 
as well as the growing complexity of models~\cite{srivastava2015training} 
has led to learning schemes that require a lot of computational resources.
As a consequence, most industry-grade machine-learning implementations are now distributed~\cite{abadi2016tensorflow}.
For example, as of 2012, Google reportedly used 16.000 processors to train an image classifier~\cite{markoff2012many}.
However, distributing a computation over several machines 
induces a higher risk of failures, including crashes and computation errors. 
In the worst case, the system may undergo \emph{Byzantine} failures~\cite{lamport1982Byzantine}, i.e., 
completely arbitrary behaviors of some of the machines involved.
In practice, such failures may be due to stalled processes, or biases in the way the data samples are distributed among the processes.

A classical approach to mask failures in distributed systems is to use a state machine replication protocol~\cite{schneider1990implementing}, 
which requires however state transitions to be applied by all processes. 
In the case of distributed machine learning, this constraint can be seen in two ways: 
either \emph{(a)} the processes agree on a sample of data based on which they update their local parameter vectors, 
or \emph{(b)} they agree on how the parameter vector should be updated. 
In case \emph{(a)}, the sample of data has to be transmitted to each process, which then has to perform a heavyweight computation 
to update its local parameter vector. 
This entails communication and computational costs that defeat the entire purpose of 
distributing the work.
In case \emph{(b)}, the processes have no way to check if the chosen update for the parameter vector has indeed 
been computed correctly on real data (a Byzantine process could have proposed the update). 
Byzantine failures may easily prevent the convergence of the learning algorithm. 
Neither of these solutions is satisfactory in a realistic distributed machine learning setting.

In fact, most learning algorithms today rely on a core component, namely \emph{stochastic gradient descent} (SGD)~\cite{bottou2010large, haykin2009neural}, whether for training neural networks~\cite{haykin2009neural}, regression~\cite{zhang2004solving}, 
matrix factorization~\cite{gemulla2011large} or support vector machines~\cite{zhang2004solving}. 
In all those cases, a cost function – depending on the parameter vector – is minimized based on stochastic estimates of its gradient. 
Distributed implementations of SGD~\cite{zhang2015deep} typically take the following form: 
a single parameter server is in charge of updating the parameter vector, while worker processes perform the actual update estimation, based on the share of data they have access to.
More specifically, the parameter server executes  synchronous rounds, 
during each of which, the parameter vector is broadcast to the workers.
In turn, each worker computes an estimate of the update to apply (an estimate of the \emph{gradient}), 
and the parameter server aggregates their results to finally update the parameter vector.
Today, this aggregation is typically implemented through averaging \cite{polyak1992acceleration}, 
or variants of it \cite{zhang2015deep, lian2015asynchronous, tsitsiklis1986distributed}.

The question we address in this paper is  
how a distributed SGD can be devised to tolerate $f$ Byzantine processes among the $n$ workers.

\paragraph{Contributions.}
We first show in this paper that no linear combination (current approaches) of the updates proposed by the workers 
can tolerate a \emph{single} Byzantine worker. 
Basically, 
% discussion bureau: + expliquer en quoi un seul byzantin prend son coefficient divisant la somme des autres en sens inverse
the Byzantine worker can force the parameter server to choose any arbitrary vector, even one that is too large in amplitude or too far in direction from the other vectors. Clearly, the Byzantine worker can prevent any classic averaging-based approach to converge. Choosing the appropriate update from the vectors proposed by the workers turns out to be challenging.
A non-linear, \textit{distance-based} choice function, that chooses, among the proposed vectors, 
the vector ``closest to everyone else'' 
(for example by taking the vector that minimizes the sum of the distances to every other vector), might look appealing. 
Yet, such a distance-based choice tolerates only a single Byzantine worker. 
Two Byzantine workers can collude, one helping the other to be selected, by moving the barycenter of all the vectors farther from the ``correct area''.

We formulate a Byzantine resilience property
capturing sufficient conditions for the parameter server's choice to tolerate $f$ Byzantine workers. 
Essentially, to guarantee that the cost will decrease despite Byzantine workers, we require the parameter server's choice 
\emph{(a)} to point, on average, to the same direction as the gradient and 
\emph{(b)} to have statistical moments (up to the fourth moment) 
bounded above by a homogeneous polynomial in the moments of a correct estimator of the gradient.
One way to ensure such a resilience property is to consider a \textit{majority-based} approach, looking  at every subset of $n-f$ vectors, and considering the subset with the smallest diameter. 
While this approach is more robust to Byzantine workers that propose vectors far from the correct area, 
its exponential computational cost is prohibitive.
Interestingly, combining the intuitions of the \textit{majority-based} and \textit{distance-based} methods, 
we can choose the vector that is somehow the closest to its $n-f$ neighbors. Namely, the one that minimizes a distance-based criteria, but only within its $n-f$ neighbors.
This is the main idea behind our choice function we call \emph{Krum}\footnote{{Krum, in Greek \textgreek{Κρούμος},
was a Bulgarian Khan of the end of the eighth century, who undertook offensive attacks against the Byzantine empire. Bulgaria doubled in size during his reign.}}. 
Assuming $2f+2 < n$, we show (using techniques from multi-dimensional stochastic calculus)
that our Krum function satisfies the resilience property aforementioned and the corresponding machine learning scheme converges. 
An important advantage of the Krum function is that it requires $O(n^2 \cdot (d + \log n))$ local 
computation time, 
where $d$ is the dimension of the parameter vector.
(In modern machine learning, the dimension $d$ of the parameter vector may take values in the hundreds of  billions~\cite{trask2015modeling}.)
For simplicity of presentation, we first introduce a version of the Krum function that selects only one vector.
Then we discuss how this method can be iterated to leverage the contribution of more than one single correct worker.

\paragraph{Paper Organization.} 
Section~\ref{sec:model} recalls the classical model of distributed SGD. 
Section~\ref{sec:byzresilience} proves that linear combinations (solutions used today) are not resilient even to a single Byzantine worker, 
then introduces the new concept of $(\alpha,f)$-Byzantine resilience. 
In Section~\ref{sec:krum}, we introduce the Krum function, compute its computational cost and prove its $(\alpha,f)$-Byzantine resilience. 
In Section~\ref{sec:cvanalysis} we analyze the convergence of a distributed SGD using our Krum function. In Section~\ref{sec:multikrum} we discuss how Krum can be iterated to leverage the contribution of more workers.
Finally, we discuss related work and open problems in Section~\ref{sec:conclusion}.

% We introduce the \textit{Krum function} and we prove that it is $(\alpha, f)$-resilient under reasonable assumptions on the number of Byzantine workers ($2f+2<n$) and on the gradient vector to be computed. We finally establish the convergence of a distributed system using the Krum function despite the presence of $f$ Byzantine workers. 

\section{Model}
\label{sec:model}

% Consider a machine learning algorithm with a \emph{parameter vector} $x$ (e.g, the weights of synapses in a neural network, the number of coefficients in a linear regression, the coordinates of the centers in a K-Means clustering algorithm, etc.). Let $d$ be the dimension of $x$, i.e the number of free parameters in the algorithm at hand. The purpose is to learn an optimal value of $x$ such that the algorithm achieves the best possible accuracy.\ju{define accuracy}

We consider a general distributed system consisting of a parameter server\footnote{The parameter server is assumed to be reliable. 
Classical techniques of state-machine replication can be used to avoid this single point of failure.}~\cite{abadi2016tensorflow}, and $n$ workers, $f$ of them possibly Byzantine (behaving arbitrarily). 
Computation is divided into (infinitely many) synchronous rounds.
During round $t$, the parameter server broadcasts its parameter vector $x_t \in \mathbb{R}^d$ to all the workers. 
Each correct worker $p$ computes an estimate $V_p^t = G(x_t,\xi_p^t)$ of the gradient $\nabla Q(x_t)$ of the cost function $Q$, 
where $\xi_p^t$ is a random variable representing, e.g., the sample drawn from the dataset. 
% And $Q$ is the cost function, representing the error made by using $x_t$ as the actual parameter.
A Byzantine worker $b$ proposes a vector $V_b^t$ which can be arbitrary (see Figure~\ref{fig:gradients}).
\begin{wrapfigure}{r}{.5\linewidth}
\centering
\begin{tikzpicture}
\draw[domain=-5:.5] plot (\x,{pow(\x,2)/10});
\draw[-{Stealth[scale=1]},blue,very thick] (-3,{pow(-3,2)/10}) -- (-1,{pow(-3,2)/10+2*(-3)/10*(-1-(-3))});
\draw[-{Stealth[scale=1]},black,dashed,very thick] (-3,{pow(-3,2)/10}) -- (-1+0.2,{pow(-3,2)/10+2*(-3)/10*(-1-(-3))+0.2});
\draw[-{Stealth[scale=1]},black,dashed,very thick] (-3,{pow(-3,2)/10}) -- (-1-0.1,{pow(-3,2)/10+2*(-3)/10*(-1-(-3))-0.2});
\draw[-{Stealth[scale=1]},black,dashed,very thick] (-3,{pow(-3,2)/10}) -- (-1-0.3,{pow(-3,2)/10+2*(-3)/10*(-1-(-3))-0.3});
\draw[-{Stealth[scale=1]},red,dotted,very thick] (-3,{pow(-3,2)/10}) -- (-5,0);
\fill[black] (-3,{pow(-3,2)/10}) circle (1pt);
\end{tikzpicture}
\caption{The gradient estimates computed by correct workers (black dashed arrows) are distributed around the actual gradient (blue solid arrow) of the cost function (thin black curve). A Byzantine worker can propose an arbitrary vector (red dotted arrow).}
\label{fig:gradients}
\end{wrapfigure}
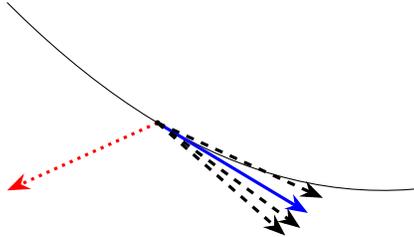
Note that, since the communication is synchronous, if the parameter server does not receive a vector value $V_b^t$ from a given Byzantine worker $b$, then the parameter server acts as if it had received the default value $V_b^t = 0$ instead.

The parameter server computes a vector $F(V_1^t,\dots,V_n^t)$ by applying a deterministic function $F$ to the vectors received.
We refer to $F$ as the \emph{choice function} of the parameter server.
The parameter server updates the parameter vector using the following SGD equation
\begin{equation*}
  x_{t+1} = x_t - \gamma_t \cdot F(V_1^t,\dots,V_n^t).
\end{equation*}
% Note that the function $F$ does not evolve through time. 
% We focus on stationary policies allowing to decrease the cost in spite of the Byzantine workers.

% The working principle behind SGD is that when $F(V_1^t,\dots,V_n^t)$ is a ``good'' estimator of the gradient $\nabla Q(x_t)$, the parameter $x_t$ will converge to an optimum $x^{*}$ that minimizes the error (the cost) made by the machine learning algorithm being performed.

In this paper, we assume that the correct (non-Byzantine) workers compute unbiased estimates of the gradient $\nabla Q(x_t)$. More precisely, 
in every round $t$, the vectors  $V^t_i$'s proposed by the correct  workers are independent identically distributed random vectors, $V^t_i \sim G(x_t, \xi_i^t)$ with $\mathbb{E}G(x_t,\xi_i^t) = \nabla Q(x_t)$. This can be achieved by ensuring that each sample of data used for computing the gradient is drawn uniformly and independently, as classically assumed in the literature of machine learning~\cite{bottou1998online}.
%RG si cette hypothèse a été faite elsewhere, il faut citer les autres papiers ou cela a été fait et dire par exemple; As in (X, Y and Z)... 
% M fait

The Byzantine workers have full knowledge of the system, including the choice function $F$, the vectors proposed by the other workers and can collaborate with each other~\cite{lynch1996distributed}.

% The latter assumption corresponds to a learning set that is evenly distributed among the workers. 

% Also, note that the parameter server, as is any server in distributed settings, can be an abstraction implemented by the workers, in a peer-to-peer version of distributed machine learning. In practice, this can be dictated by privacy concerns, imagine for instance a group of medical institutions which, collectively, are trying to learn the optimal parameter vector $x$ for a neural network that predicts cancer, we can imagine that each institution has access to a subgroup of patients on which it locally computes an estimate of the gradient, it only share this estimate, or differentially private versions of it\cite{abadi2016deep} with the other workers, without sharing its local data; this way, the group of workers are collectively learning the best parameter $x^{*}$ without harming the privacy of their patients.

% In this paper, we are interested in situations where some of the workers are adversarial\cite{laskov2010machine} and are aiming at preventing the system from convergence to an optimum parameter.

\section{Byzantine Resilience}
\label{sec:byzresilience}

In most SGD-based learning algorithms used today~\cite{bottou2010large, haykin2009neural, zhang2004solving, gemulla2011large}, the choice function  consists in computing the average of the input vectors.
Lemma~\ref{lem:average} below states that no linear combination of the vectors can tolerate  a single Byzantine worker.
In particular, averaging is not robust to Byzantine failures.
% RG: dans ce qui suit, est-ce que le convergence est claire? (a t elle été définie plutot)
% M pas besoin de parler de convergence ici, ce qu'on prouve c'est que le Byzantin a la capacité de forcer n'importe quel vecteur U d'être séléctionné

\begin{lemma}
\label{lem:average}
  Consider a choice function $F_{lin}$ of the form
  \begin{equation*}
    F_{lin}(V_1,\dots,V_n) =  \sum_{i=1}^n \lambda_i \cdot V_i.
  \end{equation*}
  where the $\lambda_i$'s are non-zero scalars.
  Let $U$ be any vector in $\mathbb{R}^d$. A single Byzantine worker can make $F$ always select $U$.
  In particular, a single Byzantine worker can prevent convergence.
\end{lemma}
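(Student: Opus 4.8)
The plan is to exploit two ingredients: the linearity of $F_{lin}$ together with the hypothesis $\lambda_n \neq 0$, and the model's assumption that a Byzantine worker has full knowledge of the system — in particular it knows the choice function $F_{lin}$ (hence every coefficient $\lambda_i$) and the vectors proposed by the correct workers. I would designate worker $n$ as the Byzantine one; this is without loss of generality since every $\lambda_i$ is nonzero. The correct workers then propose some vectors $V_1,\dots,V_{n-1} \in \mathbb{R}^d$.

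First I would solve for the Byzantine contribution. The goal is to force $\sum_{i=1}^{n} \lambda_i V_i = U$. Isolating the last term gives $\lambda_n V_n = U - \sum_{i=1}^{n-1} \lambda_i V_i$, and since $\lambda_n \neq 0$ I may divide and set
\[
  V_n = \frac{1}{\lambda_n}\Bigl(U - \sum_{i=1}^{n-1}\lambda_i V_i\Bigr).
\]
This vector is well-defined and, crucially, computable by the Byzantine worker precisely because it knows the $\lambda_i$'s and the correct vectors $V_1,\dots,V_{n-1}$. Substituting back yields $F_{lin}(V_1,\dots,V_n) = U$, which proves the first claim. Note the argument applies verbatim in every round, so the adversary can indeed force the output to be $U$ \emph{always}.

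For the non-convergence claim I would observe that $U$ is arbitrary and may be fixed independently of $x_t$. Concretely, the Byzantine worker chooses a nonzero constant $U$ once and for all and forces $F_{lin} = U$ in each round, so the SGD recursion collapses to $x_{t+1} = x_t - \gamma_t U$, whence $x_t = x_0 - U \sum_{s < t}\gamma_s$. Because the learning rates are customarily chosen with $\sum_s \gamma_s = \infty$, the iterate drifts unboundedly along the fixed direction $-U$ and cannot converge to a minimizer of $Q$; equivalently, one may simply pick $U$ pointing uphill to rule out convergence.

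There is essentially no hard step: the entire content is the one-line observation that a single nonzero coefficient lets the adversary cancel the correct workers' contributions and dial in any desired output. The only point deserving a little care is phrasing the non-convergence statement precisely — choosing $U$ so that the induced drift is incompatible with convergence — but this is immediate once we note that $U$ is completely unconstrained.
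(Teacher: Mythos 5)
Your construction of $V_n = \frac{1}{\lambda_n}\bigl(U - \sum_{i=1}^{n-1}\lambda_i V_i\bigr)$ is exactly the vector used in the paper's proof, so your argument is correct and takes the same approach; your added elaboration of the non-convergence claim (forcing a fixed drift $x_{t+1} = x_t - \gamma_t U$) is a reasonable fleshing-out of what the paper leaves implicit.
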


\begin{proof}
 If the Byzantine worker
 proposes  vector
   $V_n = \frac{1}{\lambda_n} \cdot U - \sum_{i=1}^{n-1} \frac{\lambda_i}{\lambda_n} V_i$,
 then $F = U$. 
 Note that the parameter server could cancel the effects of the Byzantine behavior by setting,
 for example, $\lambda_n$  to 0, but this  requires means to detect which worker is Byzantine.
\end{proof}

In the following, we define basic requirements on an appropriate robust choice function.
Intuitively, the choice function should output a vector $F$ that is not too far from the ``real'' gradient $g$, 
more precisely, the vector that points to the steepest direction of the cost function being optimized.
% RG: c'est clair le real gradient? 
% M i.e the vector that points towards the optimum of the function being minimized.
% More precisely, since the gradient indicates the steepest direction,  
% vector $F$ should point around the same direction as $g$.
This is expressed as a lower bound (condition \emph{(i)}) on the scalar product of the (expected) vector $F$ and $g$.
Figure~\ref{fig:choicefunctioncontrol} illustrates the situation geometrically.
If $\mathbb{E}F$ belongs to the ball centered at $g$ with radius $r$, 
then the scalar product is bounded below by a term involving  $\sin\alpha = r/\lVert g \rVert$.

Condition \emph{(ii)} is more technical, and states that the moments of $F$ should be controlled by the moments of the (correct) gradient estimator $G$.
% RG: il faut par exemple parler de ces moments dans l'intro
% M c'est en partie ce qui est entendu par la phrase "To do so, we use techniques from multidimensional stochastic calculus" qui vient d'être rajoutée
The bounds on the moments of $G$ are classically used to control the effects of the discrete nature of the SGD dynamics~\cite{bottou1998online}.
Condition \emph{(ii)} allows to transfer this control to the choice function.

\begin{figure}
\centering
\begin{tikzpicture}[scale=1.4]
\coordinate (v1) at (0,0);
\coordinate (v2) at (5,0);
\coordinate (v3) at (5,1);
%\foreach \point in {v1,v2,v3}
%    \fill [black] (\point) circle (1pt);
\draw (v1) -- (v2) -- node[right] {$r$} (v3) pic[draw,"$\alpha$",angle radius=2cm, angle eccentricity=.9] {angle=v2--v1--v3};
\draw[-{Stealth[scale=1.5]}] (v1) -- node[above] {$g$} (v3);
\draw (v3) circle (1);
\draw ($(v2)+(-.1,0)$) -- ($(v2)+(-.1,.1)$) -- ($(v2)+(-0,.1)$);
\end{tikzpicture}
\caption{If $\left\lVert \mathbb{E}F - g \right\rVert \le r$ then $\langle \mathbb{E}F,g \rangle$ is bounded below by $(1-\sin\alpha) \lVert g \rVert^2$ where $\sin \alpha = r/\lVert g \rVert$.}
\label{fig:choicefunctioncontrol}
\end{figure}
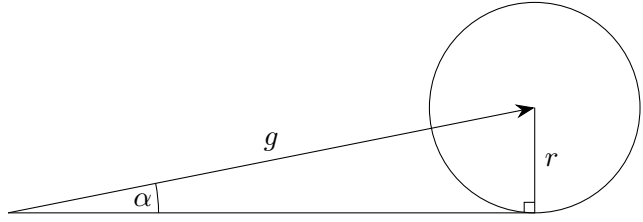

\begin{definition}[$(\alpha,f)$-Byzantine Resilience]
 Let $0 \le \alpha < \pi/2$ be any angular value, and any integer $0 \le f \le n$.
 Let $V_1,\dots,V_n$ be any independent identically distributed random vectors in $\mathbb{R}^d$, $V_i \sim G$, with $\mathbb{E}G = g$.
 Let $B_1,\dots,B_f$ be any random vectors in $\mathbb{R}^d$, possibly dependent on the $V_i$'s.
 Choice function $F$ is said to be $(\alpha,f)$-Byzantine resilient if, for any $1 \leq j_1 < \dots < j_f \leq n$, the vector
 \begin{equation*}
   F = F(V_1,\dots,\underbrace{B_1}_{j_1},\dots,\underbrace{B_f}_{j_f},\dots,V_n)
 \end{equation*}
 satisfies
    \emph{(i)} $\langle \mathbb{E}F, g \rangle \ge (1- \sin\alpha) \cdot \lVert g \rVert^2 > 0$ and
    \emph{(ii)} for $r = 2,3,4$,  $\mathbb{E}\left\lVert F \right\rVert^r$   is bounded above by 
    a linear combination of terms $\mathbb{E}\left\lVert G \right\rVert^{r_1} \dots \mathbb{E} \left\lVert G \right\rVert^{r_{n-1}}$
    with $r_1 + \dots + r_{n-1} = r$.
    \label{def:f-byz-resilience}
\end{definition}

\section{The Krum Function}
\label{sec:krum}

We now introduce \emph{Krum}, our choice function, which, we show, 
satisfies the $(\alpha, f)$-Byzantine resilience condition.
The barycentric choice function
% \begin{equation*}
  $F_{bary} = \frac{1}{n} \sum_{i=1}^n V_i$
% \end{equation*}
can be defined as the vector in $\mathbb{R}^d$
that minimizes the sum of squared distances to the $V_i$'s
% \begin{equation*}
  $\sum_{i=1}^n \left\lVert F_{bary} - V_i \right\rVert^2$.
% \end{equation*}
Lemma~\ref{lem:average}, however, states that this approach does not tolerate even a single Byzantine failure.
One could try to define the choice function 
in order to select, \emph{among} the $V_i$'s, the vector $U \in \{V_1,\dots,V_n\}$ that minimizes the sum $\sum_i \left\lVert U - V_i \right\rVert^2$.
Intuitively, vector $U$ would be close to every proposed vector, 
including the correct ones, and thus would be close to the ``real'' gradient.
However, all  Byzantine workers but one
may propose vectors that are large enough to move the total barycenter far away from the correct vectors,
while the remaining Byzantine worker proposes this barycenter. 
Since the barycenter always minimizes the sum of squared distance, 
this last Byzantine worker is certain to have its vector chosen by the parameter server. 
This situation is depicted in Figure~\ref{fig:byz-taking-over}.
In other words, since this choice function takes into account all the vectors, 
including the very remote ones, the Byzantine workers can collude to force the choice of the parameter server.
% RG cela vaudrait la peine de parler aussi de cela en intro 
% M 

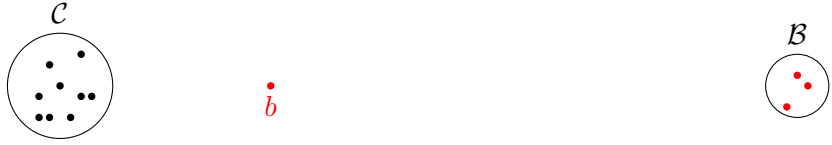
\begin{figure}[ht!]
\centering
\begin{tikzpicture}[scale=1.4]
\coordinate (v1) at (-.2,-.1);
\coordinate (v2) at (-.1,.2);
\coordinate (v3) at (.3,-.1);
\coordinate (v4) at (0,0);
\coordinate (v5) at (-.1,-.3);
\coordinate (v6) at (-.2,-.3);
\coordinate (v7) at (.1,-.3);
\coordinate (v8) at (.2,-.1);
\coordinate (v9) at (.2,.3);
\foreach \point in {v1,v2,v3,v4,v5,v6,v7,v8,v9}
    \fill[black] (\point) circle (1pt);

\draw (0,0) circle (.5);
\node[above] at (0,.5) {${\cal C}$};

\coordinate (b1) at (7,.1);
\coordinate (b2) at (6.9,-.2);
\coordinate (b3) at (7.1,0);
\foreach \point in {b1,b2,b3}
    \fill[red] (\point) circle (1pt);

\draw (7,0) circle (.3);
\node[above] at (7,.3) {${\cal B}$};

\coordinate (b4) at (2,0);
\fill[red] (b4) circle (1pt) node[below] {$b$};
% \draw (v4) -- (v5);
% \draw (v1) -- (v2) -- node[right] {$\eta\sqrt{d}\sigma$} (v3) pic[draw,"$\alpha$",angle radius=2cm, angle eccentricity=.9] {angle=v2--v1--v3};
% \draw (v5) -- (v1) -- (v3) pic[draw,"$\beta$",angle radius=1.5cm, angle eccentricity=.8] {angle=v3--v1--v5};
% \draw[-{Stealth[scale=1.5]}] (v1) -- node[above] {$\nabla Q(x_t)$} (v3);
% \draw[-{Stealth[scale=1.5]}] (v4) -- node[left] {$x_t$} (v1);
% \draw (v3) circle (1);
% \draw ($(v2)+(-.1,0)$) -- ($(v2)+(-.1,.1)$) -- ($(v2)+(-0,.1)$);
\end{tikzpicture}
\caption{Selecting the vector that minimizes the sum of the squared distances to other vectors does not prevent arbitrary vectors proposed by Byzantine workers from being selected if $f\geq 2$. 
If the gradients computed by the correct workers lie in area ${\cal C}$, the Byzantine workers can collude to propose up to $f-1$ vectors in an arbitrarily remote area ${\cal B}$, thus allowing another Byzantine vector $b$, close to the barycenter of proposed vectors, to be selected.}
\label{fig:byz-taking-over}
\end{figure}

Our approach to circumvent this issue is to preclude the vectors that are too far away.
More precisely, we define our \emph{Krum} choice function  $\Kr(V_1,\dots,V_n)$ as follows. 
For any $i \ne j$, we denote by $i \rightarrow j$ the fact that $V_j$ belongs to the $n-f-2$ closest vectors to $V_i$.
Then, we define for each worker $i$, the \emph{score}
% \begin{equation*}
  $s(i) = \sum_{i \rightarrow j} \left\lVert V_i - V_j \right\rVert^2$
% \end{equation*}
where the sum runs over the $n-f-2$ closest vectors to $V_i$.
Finally, $\Kr(V_1,\dots,V_n) = V_{i_*}$ where $i_*$ refers to 
the
worker minimizing the score, $s(i_*) \le s(i)$ for all $i$.\footnote{If two or more workers have the minimal score, we choose the vector of the worker with the smallest identifier.}

\begin{lemma}
The time complexity of the Krum Function $\Kr(V_1,\dots,V_n)$, where $V_1,\dots,V_n$ are $d$-dimensional vectors, is $O(n^2 \cdot (d + \log n))$
\end{lemma}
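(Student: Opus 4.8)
The plan is to decompose the evaluation of $\Kr(V_1,\dots,V_n)$ into three phases — computing pairwise distances, computing the scores, and selecting the minimizer — and to bound each phase separately before summing.

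First I would compute, once and for all, the squared distances $\lVert V_i - V_j \rVert^2$ for every pair $i \ne j$. There are $\binom{n}{2} = O(n^2)$ unordered pairs, and each distance requires a componentwise subtraction, squaring, and summation over the $d$ coordinates, costing $O(d)$. Caching these values in a symmetric table, this phase contributes $O(n^2 \cdot d)$.

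Second, for each worker $i$ I would extract the $n-f-2$ smallest among the $n-1$ already-computed distances emanating from $V_i$ and accumulate them to form the score $s(i)$. Sorting the $n-1$ distances costs $O(n \log n)$, and summing the $n-f-2$ smallest afterward is $O(n)$; performed for all $n$ workers, this phase contributes $O(n^2 \log n)$. The final selection of the index $i_*$ minimizing $s(i)$ is a single linear scan over the $n$ scores, costing $O(n)$. Summing the three phases yields $O(n^2 d) + O(n^2 \log n) + O(n) = O(n^2 \cdot (d + \log n))$, as claimed.

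I do not expect a genuine mathematical obstacle here, as this is a routine accounting argument; the only point demanding care is that each pairwise distance must be evaluated exactly once and reused across the two workers it concerns, so that the $O(d)$-cost distance evaluations are not repeated inside the per-worker loop — otherwise the distance phase would be charged $O(n^2 d)$ \emph{per worker}, inflating the bound to $O(n^3 d)$. A secondary remark is that the per-worker sort could be replaced by a linear-time selection of the $(n-f-2)$-th smallest distance, which would remove the $\log n$ factor from the second phase; the stated bound, however, already subsumes the simpler sorting-based implementation.
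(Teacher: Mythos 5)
Your proposal is correct and follows essentially the same accounting as the paper: $O(n\cdot d)$ distance work and $O(n\log n)$ sorting per worker, summed over $n$ workers, plus a negligible $O(n)$ final scan. (The paper does not even bother with your symmetric caching — recomputing the $n-1$ distances from each $V_i$ still costs only $O(n\cdot d)$ per worker and $O(n^2\cdot d)$ overall, so the feared $O(n^3 d)$ blow-up does not actually arise.)
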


\begin{proof}
For each $V_i$, the parameter server computes the $n$ squared distances $\left\lVert V_i - V_j \right\rVert^2$ (time $O(n \cdot d)$). 
Then the parameter server sorts these distances (time $O(n \cdot \log n)$) and sums the first $n-f-1$ values (time $O(n\cdot d)$).
Thus, computing the score of all the $V_i$'s takes $O(n^2 \cdot (d + \log n))$. 
An additional term $O(n)$ is required to find the minimum score, but is negligible relatively to $O(n^2 \cdot (d + \log n))$.
\end{proof}

% \begin{center}
%  DIFFICULTY : byz may be selected, but is constrained by correct ones. As in the median.
% \end{center}

Proposition~\ref{prop:krumresilient} below states that, if $2f+2 < n$ and the gradient estimator is accurate enough,
(its standard deviation is relatively small compared to the norm of the gradient), 
then the Krum function is $(\alpha,f)$-Byzantine-resilient,
% RG je crois qu'il faut écrire Byzantine-resilient 
% M ok
where  angle $\alpha$ depends on the ratio of the deviation over the gradient.
When the Krum function selects a correct vector (i.e., a vector proposed by a correct worker), the proof of this fact is relatively easy, 
since the probability distribution of this correct vector is that of the gradient estimator $G$.
The core difficulty occurs when the Krum function selects a Byzantine vector (i.e., a vector proposed by a Byzantine worker), 
because the distribution of this vector is completely arbitrary, and may even depend on the correct vectors.
In a very general sense, this part of our proof is reminiscent of the median technique: 
the median of $n > 2f$ scalar values is always bounded below and above by values proposed by correct workers.
Extending this observation to our multi-dimensional is not trivial.
To do so, we notice that the chosen Byzantine vector $B_k$ has a score not greater than any score of a correct worker.
This allows us to derive an upper bound  on the distance between $B_k$ and the real gradient.
This upper bound involves a sum of distances from correct to correct neighbor vectors, 
and distances from correct to Byzantine neighbor vectors.
As explained above, the first term is relatively easy to control. 
For the second term, 
we observe that a correct vector $V_i$ has $n-f-2$ neighbors (the $n-f-2$ closest vectors to $V_i$), 
and $f+1$ non-neighbors. 
In particular, the distance from any (possibly Byzantine) neighbor $V_j$ to $V_i$ 
is bounded above by a correct to correct vector distance. 
In other words, we manage to control the distance between the chosen Byzantine vector 
and the real gradient by an upper bound involving only distances between vectors proposed by correct workers.

\begin{proposition}
  Let $V_1,\dots,V_n$ be any independent and identically distributed random $d$-dimensional vectors s.t $V_i \sim G$, with $\mathbb{E}G = g$ 
  and $\mathbb{E}\left\lVert G - g \right\rVert^2 = d\sigma^2$. 
  Let $B_1,\dots,B_f$ be any $f$ random vectors, possibly dependent on the $V_i$'s.
  If $2f + 2 < n$  and $\eta(n,f)\sqrt{d}\cdot \sigma < \lVert g \rVert$,
  where
  \begin{equation*}
    \eta(n,f) \underset{def}{=} \sqrt{2\left(n-f + \frac{f\cdot (n-f-2) + f^2\cdot(n-f-1)}{n-2f-2}\right)}
    = \left\{\begin{array}{cc}
     O(n) &  \text{if } f = O(n)\\
     O(\sqrt{n}) & \text{if } f = O(1) 
    \end{array}
    \right.,
  \end{equation*}
  then the Krum function $\Kr$ is $(\alpha,f)$-Byzantine resilient where $0 \le \alpha < \pi/2$ is defined by
  \begin{equation*}
    \sin \alpha = \frac{ \eta(n,f) \cdot \sqrt{d} \cdot \sigma }{\lVert g \rVert}.
  \end{equation*}
  \label{prop:krumresilient}
\end{proposition}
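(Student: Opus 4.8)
The plan is to reduce both requirements of Definition~\ref{def:f-byz-resilience} to a single quantity, the expected squared distance $\mathbb{E}\lVert F - g\rVert^2$ between the selected vector and the true gradient, and then to establish the bound $\mathbb{E}\lVert F - g\rVert^2 \le \eta(n,f)^2\, d\sigma^2$. Granting this, condition~\emph{(i)} follows immediately: by Jensen and Cauchy--Schwarz, $\lVert \mathbb{E}F - g\rVert \le \mathbb{E}\lVert F - g\rVert \le \sqrt{\mathbb{E}\lVert F - g\rVert^2} \le \eta(n,f)\sqrt{d}\,\sigma = \sin\alpha \cdot \lVert g\rVert$, whence $\langle \mathbb{E}F, g\rangle = \lVert g\rVert^2 + \langle \mathbb{E}F - g, g\rangle \ge \lVert g\rVert^2 - \lVert \mathbb{E}F - g\rVert\,\lVert g\rVert \ge (1 - \sin\alpha)\lVert g\rVert^2$, which is strictly positive precisely because the hypothesis $\eta(n,f)\sqrt{d}\,\sigma < \lVert g\rVert$ forces $\sin\alpha < 1$.

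So the heart of the argument is the bound on $\mathbb{E}\lVert F - g\rVert^2$, which I would obtain by exploiting the defining property of $\Kr$: the selected index $i_*$ minimizes the score, so $s(i_*) \le s(c)$ for every correct worker $c$. First I would control a correct worker's score. Writing $\delta_c(c)$ and $\delta_b(c)$ for the number of correct and Byzantine workers among the $n-f-2$ neighbours of $c$ (so $\delta_c(c) + \delta_b(c) = n-f-2$), I split $s(c)$ into a correct-to-correct part and a correct-to-Byzantine part. Each correct-to-correct term is controlled by $\mathbb{E}\lVert V_c - V_j\rVert^2 = \mathbb{E}\lVert V_c - g\rVert^2 + \mathbb{E}\lVert V_j - g\rVert^2 = 2d\sigma^2$, using independence and unbiasedness of the correct estimators.

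The main obstacle is the correct-to-Byzantine part, since a Byzantine vector has an arbitrary, possibly adversarial distribution. Here I would use a counting argument. A correct worker $c$ has $f+1$ non-neighbours, of which exactly $\delta_b(c)+1$ are correct; since every Byzantine neighbour of $c$ is, by definition of the neighbour relation, at least as close to $V_c$ as any non-neighbour, I can injectively assign to each Byzantine neighbour $b$ a distinct correct non-neighbour $\phi(b)$ with $\lVert V_c - B_b\rVert \le \lVert V_c - V_{\phi(b)}\rVert$. This replaces every correct-to-Byzantine distance by a correct-to-correct one, so $s(c)$ — and hence $s(i_*)$ — is dominated in expectation by a controlled multiple of $d\sigma^2$. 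To convert the score bound into a bound on $\lVert V_{i_*} - g\rVert$, I note that $i_*$ always has at least one correct neighbour $j^*$ (because $2f+2 < n$ leaves at least one correct vector among its $n-f-2$ closest), so $\lVert V_{i_*} - g\rVert^2 \le 2\lVert V_{i_*} - V_{j^*}\rVert^2 + 2\lVert V_{j^*} - g\rVert^2 \le 2\,s(i_*) + 2\lVert V_{j^*} - g\rVert^2$. Carefully bookkeeping how many correct-to-correct and correct-to-Byzantine terms appear — which is exactly where the coefficients $n-f$, $f(n-f-2)$, and $f^2(n-f-1)/(n-2f-2)$ in $\eta(n,f)$ originate — yields $\mathbb{E}\lVert F - g\rVert^2 \le \eta(n,f)^2 d\sigma^2$.

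Finally, for condition~\emph{(ii)} I would run the same reduction at the level of norms rather than distances. Using $\lVert V_{i_*} - V_{j^*}\rVert^2 \le s(i_*) \le s(c)$ together with the triangle inequality and the correct-to-correct domination above, I can bound $\lVert F\rVert$ by a fixed linear combination $\sum_{i} a_i \lVert V_i\rVert$ over the correct workers with nonnegative constant coefficients $a_i$ (obtained by replacing each data-dependent selection with a sum over all correct workers). Raising this to a power $r \in \{2,3,4\}$ and expanding by the multinomial theorem gives a finite sum of monomials $\prod_i \lVert V_i\rVert^{r_i}$ with $\sum_i r_i = r$; taking expectations and using the mutual independence of the correct vectors factorizes each monomial as $\prod_i \mathbb{E}\lVert V_i\rVert^{r_i} = \prod_i \mathbb{E}\lVert G\rVert^{r_i}$, exactly the homogeneous form required. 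I expect the Byzantine case of the distance bound to be the only genuinely delicate step; the rest is triangle inequalities, independence, and careful counting.
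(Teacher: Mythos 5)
Your skeleton matches the paper's: both proofs exploit the score minimality $s(i_*)\le s(c)$ against every correct worker $c$, both control correct-to-correct terms via $\mathbb{E}\lVert V_c-V_j\rVert^2=2d\sigma^2$, and both handle the correct-to-Byzantine terms by the same counting trick (a correct worker has $f+1$ non-neighbours, hence at least one correct non-neighbour whose distance to $V_c$ dominates that of every Byzantine neighbour; the paper uses a single witness $\zeta(i)$ for all of them, and your injective assignment is a harmless variant). Your treatment of condition \emph{(ii)} --- bound $\lVert B_{i_*}\rVert$ by a constant-coefficient linear combination of the $\lVert V_j\rVert$, expand multinomially, factorize by independence --- is exactly the paper's.

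The gap is in the step that converts the score bound into a distance bound, and it is quantitative but real: as sketched it will not yield the stated $\eta(n,f)$. You compare $V_{i_*}$ to a \emph{single} correct neighbour $j^*$ and write $\lVert V_{i_*}-g\rVert^2\le 2\lVert V_{i_*}-V_{j^*}\rVert^2+2\lVert V_{j^*}-g\rVert^2\le 2\,s(i_*)+2\lVert V_{j^*}-g\rVert^2$. This loses the factor $1/\delta_c(i_*)\le 1/(n-2f-2)$: the denominators $n-2f-2$ in $\eta^2(n,f)$ come precisely from comparing the selected vector to the \emph{average} $\frac{1}{\delta_c(i_*)}\sum_{i_*\rightarrow \text{correct }j}V_j$ of its correct neighbours and applying Jensen's inequality, so that $\bigl\lVert B_k-\frac{1}{\delta_c(k)}\sum_{k\rightarrow\text{correct }j} V_j\bigr\rVert^2\le\frac{1}{\delta_c(k)}\sum_{k\rightarrow\text{correct }j}\lVert B_k-V_j\rVert^2\le\frac{1}{n-2f-2}\,s(c)$. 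Your single-neighbour bound is larger by a factor of order $n-2f-2$, on top of the extra factor $2$ from $(a+b)^2\le 2a^2+2b^2$ and the additive term $\mathbb{E}\lVert V_{j^*}-g\rVert^2$, which --- since $j^*$ is data-dependent --- must itself be bounded by a union over all correct candidates, costing another factor of order $n-f$. The result is $(\alpha',f)$-resilience for some $\sin\alpha'$ that is a (growing-with-$n$) multiple of the stated $\sin\alpha$, i.e., a strictly weaker proposition than the one claimed. To recover the stated constant, replace the single-neighbour comparison by the neighbour-average comparison throughout; note also that the paper never needs $\mathbb{E}\lVert F-g\rVert^2$ at all for condition \emph{(i)} --- it bounds $\lVert\mathbb{E}\Kr-g\rVert^2\le\mathbb{E}\bigl\lVert\Kr-\frac{1}{\delta_c(i_*)}\sum_{i_*\rightarrow\text{correct }j}V_j\bigr\rVert^2$, using that the neighbour average has expectation $g$.
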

% Remark that $\eta(n,f) = \Theta(n)$ when the ratio $f/n = O(1)$. However, if $f = O(1)$, then $\eta(n,f) = O(\sqrt{n})$.
%
%
\vspace{-15pt}
The condition on the norm of the gradient, $\eta(n,f) \cdot \sqrt{d} \cdot \sigma < \lVert g \rVert$, can be satisfied, to a certain extent, 
by having the (correct) workers computing their gradient estimates on mini-batches~\cite{bottou1998online}. 
Indeed, averaging the gradient estimates over a mini-batch divides the deviation $\sigma$ by the squared root of the size of the mini-batch.

\begin{proof}
  Without loss of generality, we assume that the Byzantine vectors $B_1,\dots,B_f$ occupy the last $f$ positions in the list of arguments of $\Kr$,
  i.e., $\Kr = \Kr(V_1,\dots,V_{n-f},B_1,\dots,B_f)$. 
  An index is \emph{correct} if it refers to a vector among $V_1,\dots,V_{n-f}$. 
  An index is \emph{Byzantine} if it refers to a vector among $B_1,\dots,B_f$.
  For each index (correct or Byzantine) $i$, we denote by $\delta_c(i)$ (resp. $\delta_b(i)$) 
  the number of correct (resp. Byzantine) indices $j$ such that $i \rightarrow j$.
  We have
  \begin{align*}
    \delta_c(i) + &\delta_b(i) = n-f-2 \\
      n-2f-2 \le &\delta_c(i) \le n-f-2 \\
          &\delta_b(i) \le f.
  \end{align*}
  We focus first on the condition \emph{(i)} of $(\alpha,f)$-Byzantine resilience.
  We determine an upper bound on the squared distance $\lVert \mathbb{E}\Kr - g \rVert^2$.
  Note that, for any correct $j$, $\mathbb{E}V_j = g$. We denote by $i_*$ the index of the vector chosen by the Krum function.
  
  \vspace{-10pt}
  
  \begin{align*}
     \left\lVert \mathbb{E}\Kr - g \right\rVert^2 &\le \left\lVert \mathbb{E} \left( \Kr - \frac{1}{\delta_c(i_*)} \sum_{i_* \rightarrow \text{ correct } j} V_j \right) \right\rVert^2 \\
         &\le \mathbb{E} \left\lVert \Kr - \frac{1}{\delta_c(i_*)} \sum_{i_* \rightarrow \text{ correct } j} V_j \right\rVert^2 ~~ \text{(Jensen inequality)} \\
         &\le \sum_{\text{correct } i} \mathbb{E} \left\lVert V_i - \frac{1}{\delta_c(i)} \sum_{i \rightarrow \text{ correct } j} V_j \right\rVert^2 \mathbb{I}(i_* = i) \\
         &+ \sum_{\text{byz } k} \mathbb{E}\left\lVert B_k - \frac{1}{\delta_c(k)} \sum_{k \rightarrow \text{ correct } j} V_j \right\rVert^2 \mathbb{I}(i_* = k)
  \end{align*}
  where $\mathbb{I}$ denotes the indicator function\footnote{$\mathbb{I}(P)$ equals $1$ if the predicate $P$ is true, and $0$ otherwise.}.
  We examine the case $i_* = i$ for some correct index $i$. 
  \begin{align*}
    \left\lVert V_i - \frac{1}{\delta_c(i)} \sum_{i \rightarrow \text{ correct } j} V_j \right\rVert^2 &= \left\lVert \frac{1}{\delta_c(i)} \sum_{i \rightarrow \text{ correct } j} V_i - V_j \right\rVert^2 \\
      &\le \frac{1}{\delta_c(i)} \sum_{i \rightarrow \text{ correct } j} \left\lVert V_i - V_j \right\rVert^2 ~~ \text{(Jensen inequality)} \\
    \mathbb{E} \left\lVert V_i - \frac{1}{\delta_c(i)} \sum_{i \rightarrow \text{ correct } j} V_j \right\rVert^2  &\le \frac{1}{\delta_c(i)} \sum_{i \rightarrow \text{ correct } j} \mathbb{E}\left\lVert V_i - V_j \right\rVert^2 \\
      &\le 2d\sigma^2.
  \end{align*}
  We now examine the case $i_* = k$ for some Byzantine index $k$.
  The fact that $k$ minimizes the score implies that for all correct indices $i$
  \begin{equation*}
    \sum_{k \rightarrow \text{ correct } j} \left\lVert B_k - V_j \right\rVert^2 
      + \sum_{k \rightarrow \text{ byz } l} \left\lVert B_k - B_l \right\rVert^2 
      \le \sum_{i \rightarrow \text{ correct } j} \left\lVert V_i - V_j \right\rVert^2 
      + \sum_{i \rightarrow \text{ byz } l} \left\lVert V_i - B_l \right\rVert^2.
  \end{equation*}
  Then, for all correct indices $i$
  \begin{align*}
    \left\lVert B_k - \frac{1}{\delta_c(k)} \sum_{k \rightarrow \text{ correct } j} V_j \right\rVert^2
      &\le \frac{1}{\delta_c(k)} \sum_{k \rightarrow \text{ correct } j} \left\lVert B_k - V_j \right\rVert^2 \\
      &\le  \frac{1}{\delta_c(k)} \sum_{i \rightarrow \text{ correct } j} \left\lVert V_i - V_j \right\rVert^2 + \frac{1}{\delta_c(k)} \underbrace{\sum_{i \rightarrow \text{ byz } l} \left\lVert V_i - B_l \right\rVert^2 }_{D^2(i)}.
  \end{align*}
  We focus on the term $D^2(i)$. Each correct worker $i$ has $n-f-2$ neighbors, and $f+1$ non-neighbors. 
  Thus there exists a correct worker $\zeta(i)$ which is farther from $i$ than any of the neighbors of $i$. 
  In particular, for each Byzantine index $l$ such that $i \rightarrow l$, $\left\lVert V_i - B_l \right\rVert^2 \le \left\lVert V_i - V_{\zeta(i)}\right\rVert^2$. Whence
  \begin{align*}
    \left\lVert B_k - \frac{1}{\delta_c(k)} \sum_{k \rightarrow \text{ correct } j} V_j \right\rVert^2
      &\le \frac{1}{\delta_c(k)} \sum_{i \rightarrow \text{ correct } j} \left\lVert V_i - V_j \right\rVert^2 + \frac{\delta_b(i)}{\delta_c(k)} \left\lVert V_i - V_{\zeta(i)}\right\rVert^2 \\
    \mathbb{E} \left\lVert B_k - \frac{1}{\delta_c(k)} \sum_{k \rightarrow \text{ correct } j} V_j \right\rVert^2
       &\le \frac{\delta_c(i)}{\delta_c(k)} \cdot 2d\sigma^2 + \frac{\delta_b(i)}{\delta_c(k)} \sum_{\text{correct } j \ne i} \mathbb{E} \left\lVert V_i - V_j \right\rVert^2 \mathbb{I}(\zeta(i) = j) \\
       &\le \left(\frac{\delta_c(i)}{\delta_c(k)} \cdot + \frac{\delta_b(i)}{\delta_c(k)} (n-f-1)\right) 2d\sigma^2 \\
       &\le \left(\frac{n-f-2}{n-2f-2} + \frac{f}{n-2f-2} \cdot (n-f-1)\right) 2d\sigma^2.
  \end{align*}

  Putting everything back together, we obtain
  \begin{align*}
    \left\lVert \mathbb{E}\Kr - g \right\rVert^2 &\le (n-f) 2d \sigma^2 + f \cdot \left(\frac{n-f-2}{n-2f-2} + \frac{f}{n-2f-2} \cdot (n-f-1)\right) 2d \sigma^2 \\
       &\le \underbrace{2\left(n-f + \frac{f\cdot (n-f-2) + f^2\cdot(n-f-1)}{n-2f-2}\right)}_{\eta^2(n,f)} d\sigma^2.
  \end{align*}
  By assumption, $\eta(n,f) \sqrt{d} \sigma < \lVert g \rVert$, i.e., $\mathbb{E}\Kr$ belongs to a ball centered at $g$ with radius $\eta(n,f) \cdot \sqrt{d} \cdot \sigma$. This implies
  \begin{equation*}
    \langle \mathbb{E}\Kr, g \rangle \ge \left( \lVert g \rVert - \eta(n,f) \cdot \sqrt{d} \cdot \sigma  \right) \cdot \lVert g \rVert
       = (1 - \sin \alpha) \cdot \lVert g \rVert^2.
  \end{equation*}
  To sum up, condition~\emph{(i)} of the $(\alpha,f)$-Byzantine resilience property holds.
  We now focus on condition~\emph{(ii)}.
  \begin{align*}
    \mathbb{E} \lVert \Kr \rVert^r &= \sum_{\text{correct } i} \mathbb{E} \left\lVert V_i \right\rVert^r \mathbb{I}(i_* = i) 
                                      + \sum_{\text{byz } k} \mathbb{E} \left\lVert B_k \right\rVert^r \mathbb{I}(i_* = k)\\
            &\le (n-f) \mathbb{E} \left\lVert G \right\rVert^r + \sum_{\text{byz } k} \mathbb{E} \left\lVert B_k \right\rVert^r \mathbb{I}(i_* = k).
  \end{align*}
  Denoting by $C$ a generic constant, when $i_* = k$, we have for all correct indices $i$
  \begin{align*}
    \left\lVert B_k - \frac{1}{\delta_c(k)} \sum_{k \rightarrow \text{correct } j} V_j \right\rVert
      &\le \sqrt{\frac{1}{\delta_c(k)} \sum_{i \rightarrow \text{ correct } j} \left\lVert V_i - V_j \right\rVert^2 + \frac{\delta_b(i)}{\delta_c(k)} \left\lVert V_i - V_{\zeta(i)}\right\rVert^2} \\
      &\le C \cdot \left( \sqrt{\frac{1}{\delta_c(k)}} \cdot \sum_{i \rightarrow \text{correct } j} \left\lVert V_i - V_j\right\rVert +  \sqrt{\frac{\delta_b(i)}{\delta_c(k)}} \cdot \left\lVert V_i - V_{\zeta(i)}\right\rVert \right)\\
      &\le C \cdot \sum_{\text{correct } j} \left\lVert V_j \right\rVert ~~ \text{(triangular inequality)}.
  \end{align*}
  The second inequality comes from the equivalence of norms in finite dimension.
  Now
  \begin{align*}
    \left\lVert B_k \right\rVert  &\le \left\lVert B_k - \frac{1}{\delta_c(k)} \sum_{k \rightarrow \text{correct } j} V_j \right\rVert + \left\lVert \frac{1}{\delta_c(k)} \sum_{k \rightarrow \text{correct } j} V_j \right\rVert \\
        &\le C \cdot \sum_{\text{correct } j} \left\lVert V_j \right\rVert \\
    \left\lVert B_k \right\rVert^r &\le C \cdot \sum_{r_1 + \dots + r_{n-f} = r} \left\lVert V_1 \right\rVert^{r_1} \cdots \left\lVert V_{n-f} \right\rVert^{r_{n-f}}.
  \end{align*}
  Since the $V_i$'s are independent, we finally obtain that $\mathbb{E} \left\lVert \Kr \right\rVert^r$ is bounded above by a linear combination
  of terms of the form $\mathbb{E} \left\lVert V_1 \right\rVert^{r_1} \cdots \mathbb{E}\left\lVert V_{n-f} \right\rVert^{r_{n-f}} = \mathbb{E} \left\lVert G \right\rVert^{r_1} \cdots \mathbb{E}\left\lVert G \right\rVert^{r_{n-f}}$ with $r_1 + \dots + r_{n-f} = r$.
  This completes the proof of condition \emph{(ii)}.
\end{proof}

\section{Convergence Analysis}
\label{sec:cvanalysis}

In this section, we analyze the convergence of the SGD using our Krum function defined in Section~\ref{sec:krum}.
%We adapt the proofs from~\cite{bottou1998online}. (redis plus bas)
The SGD equation is expressed as follows
\begin{equation*}
  x_{t+1} = x_t - \gamma_t \cdot \Kr(V^t_1,\dots,V^t_n)
\end{equation*}
where at least $n-f$ vectors among the $V^t_i$'s are correct, while the other ones may be Byzantine.
For a correct index $i$, $V^t_i = G(x_t,\xi^t_i)$ where $G$ is the gradient estimator.
We define the \emph{local standard deviation} $\sigma(x)$ by
\begin{equation*}
  d \cdot \sigma^2(x) = \mathbb{E} \left\lVert G(x,\xi) - \nabla Q(x) \right\rVert^2.
\end{equation*}

% Note that our approach allows the set of Byzantine workers 
% to change over time since our candidate choice function focuses on the vectors proposed only, 
% and not on which workers have proposed them.
% \begin{equation*}
%   \inf_{\lVert x \rVert \geq D} \langle x, \nabla Q(x)\rangle > 0.
% \end{equation*}

The following proposition considers an (\emph{a priori}) non-convex cost function.
In the context of non-convex optimization, even in the centralized case, 
it is generally hopeless to aim at proving that the parameter vector $x_t$ tends to a local minimum.
Many criteria may be used instead. We follow~\cite{bottou1998online}, 
and we prove that the parameter vector $x_t$ almost surely 
reaches a ``flat'' region (where the norm of the gradient is small), 
in a sense explained below.

\begin{restatable}{proposition}{convergence}
  We assume that
  \emph{(i)} the cost function $Q$ is three times differentiable with continuous derivatives, and is non-negative, $Q(x) \geq 0$;
  \emph{(ii)} the learning rates satisfy
      $\sum_t \gamma_t = \infty$ and  $\sum_t \gamma_t^2 < \infty$;
  \emph{(iii)} the gradient estimator satisfies
      $\mathbb{E} G(x,\xi) = \nabla Q(x)$ and
      $\forall r \in \{2,\dots,4\},~ \mathbb{E} \lVert G(x,\xi) \rVert^r \le A_r + B_r \lVert x \rVert^r$
    for some constants $A_r,B_r$;
  \emph{(iv)} there exists a constant $0 \le \alpha < \pi/2$
      such that for all $x$
      \begin{equation*}
        \eta(n,f) \cdot \sqrt{d} \cdot \sigma(x) \le \lVert \nabla Q(x) \rVert \cdot \sin\alpha;
      \end{equation*}
  \emph{(v)} finally, beyond a certain horizon, $\lVert x \rVert^2 \geq D$, there exist $\epsilon > 0$ and $0 \le \beta < \pi/2 - \alpha$ such that
    \begin{align*}
      \left\lVert \nabla Q(x) \right\rVert &\ge \epsilon > 0 \\
      \frac{ \langle x, \nabla Q(x) \rangle}{\lVert x \rVert \cdot \lVert \nabla Q(x) \rVert} &\ge \cos \beta.
    \end{align*}
    Then the sequence of gradients $\nabla Q(x_t)$ converges almost surely to zero.
    \label{prop:convergence}
\end{restatable}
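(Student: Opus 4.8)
The plan is to follow the classical stochastic-approximation argument of~\cite{bottou1998online}: I treat a suitable Lyapunov function as an almost-supermartingale (a \emph{quasi-martingale}) whose almost-sure convergence forces the gradient to vanish. The argument splits into a \emph{global confinement} step, showing that $x_t$ stays in a bounded region almost surely, and a \emph{convergence} step. The two guarantees of Proposition~\ref{prop:krumresilient} are exactly what drive the analysis: condition~\emph{(i)} supplies a negative drift proportional to $\lVert \nabla Q(x_t)\rVert^2$, while condition~\emph{(ii)}, combined with the polynomial moment bound in assumption~\emph{(iii)}, keeps the second-order error terms summable.

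First I would condition on the past $\mathcal{P}_t$ (everything through round $t$, so that $x_t$, hence $g = \nabla Q(x_t)$, is determined) and expand
\begin{equation*}
  \mathbb{E}\bigl[\lVert x_{t+1}\rVert^2 \mid \mathcal{P}_t\bigr] - \lVert x_t\rVert^2 = -2\gamma_t \langle x_t, \mathbb{E}\Kr\rangle + \gamma_t^2\,\mathbb{E}\lVert \Kr\rVert^2.
\end{equation*}
Beyond the horizon $\lVert x\rVert^2 \ge D$, assumption~\emph{(v)} places $\nabla Q(x_t)$ within angle $\beta$ of $x_t$, while the ball containment established in the proof of Proposition~\ref{prop:krumresilient}, sharpened by assumption~\emph{(iv)}, places $\mathbb{E}\Kr$ within angle $\alpha$ of $\nabla Q(x_t)$; writing $\mathbb{E}\Kr = g + e$ with $\lVert e\rVert \le \lVert g\rVert\sin\alpha$ gives $\langle x_t, \mathbb{E}\Kr\rangle \ge \lVert x_t\rVert\,\lVert g\rVert(\cos\beta - \sin\alpha) > 0$, since $\beta < \pi/2 - \alpha$. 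So the first-order term is a strictly negative inward drift, while the $O(\gamma_t^2)$ term has, by condition~\emph{(ii)} and assumption~\emph{(iii)}, a coefficient growing at most like $\lVert x_t\rVert^2$. The quasi-martingale criterion, applied as in~\cite{bottou1998online} to control the region where $\lVert x_t\rVert$ is large, then shows the positive expected variations of $\lVert x_t\rVert^2$ are summable, so $\lVert x_t\rVert^2$ converges almost surely and the trajectory is confined to a bounded region.

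On this bounded region all moments $\mathbb{E}\lVert \Kr\rVert^r$ are uniformly bounded and, by assumption~\emph{(i)}, the Hessian of $Q$ is bounded. A second-order Taylor expansion of $Q$ along the update, after conditioning, yields
\begin{equation*}
  \mathbb{E}\bigl[Q(x_{t+1}) \mid \mathcal{P}_t\bigr] - Q(x_t) \le -\gamma_t\,(1-\sin\alpha)\,\lVert \nabla Q(x_t)\rVert^2 + \gamma_t^2\,K
\end{equation*}
for a constant $K$, where the first-order term uses Byzantine-resilience condition~\emph{(i)} and $K$ absorbs the uniform Hessian and moment bounds. Since $Q \ge 0$ and $\sum_t \gamma_t^2 < \infty$, the nonnegative process $Q(x_t)$ is a quasi-martingale, hence converges almost surely; summing the displayed inequality and using $\sum_t\gamma_t^2 < \infty$ then gives $\sum_t \gamma_t\,\lVert \nabla Q(x_t)\rVert^2 < \infty$ almost surely.

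Finally, because $\sum_t \gamma_t = \infty$ while $\sum_t \gamma_t\,\lVert \nabla Q(x_t)\rVert^2 < \infty$, necessarily $\liminf_t \lVert \nabla Q(x_t)\rVert^2 = 0$. Upgrading this to an honest limit is the main obstacle: one must rule out that $\lVert \nabla Q(x_t)\rVert^2$ returns to large values infinitely often. I expect to handle it exactly as in~\cite{bottou1998online}, by bounding the one-step variation of $\lVert \nabla Q(x_t)\rVert^2$ --- which is $O(\gamma_t)$ on the confined region, since $\nabla Q$ is Lipschitz there and $\mathbb{E}\lVert \Kr\rVert$ is bounded --- and combining it with the summability above to show that $\lVert \nabla Q(x_t)\rVert^2$ is itself a quasi-martingale, hence converges almost surely; its limit must then agree with the $\liminf$, namely zero. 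This last step is precisely where the hypothesis that $Q$ is three times differentiable with continuous derivatives is needed, since analyzing the drift of $\lVert \nabla Q\rVert^2$ requires a Taylor expansion involving the third derivatives of $Q$.
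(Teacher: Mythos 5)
Your proposal follows essentially the same route as the paper's proof: Bottou's quasi-martingale argument, with a global-confinement step driven by the inward drift $\langle x_t,\mathbb{E}\Kr_t\rangle>0$ beyond the horizon $D$, then almost-sure convergence of $Q(x_t)$ giving $\sum_t\gamma_t\lVert\nabla Q(x_t)\rVert^2<\infty$, and finally a quasi-martingale analysis of $\lVert\nabla Q(x_t)\rVert^2$ (using the third derivative of $Q$) to upgrade the $\liminf$ to a genuine limit. The only point to spell out is the confinement step: the paper does not treat $\lVert x_t\rVert^2$ itself as a quasi-martingale (its drift inside the ball is uncontrolled) but works with the truncated Lyapunov function $\phi(\lVert x_t\rVert^2)$, with $\phi$ vanishing below $D$, exactly the device you implicitly invoke by deferring to Bottou.
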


\begin{figure}
\centering
\begin{tikzpicture}[scale=1.4]
\coordinate (v4) at (-1,-1);
\coordinate (v5) at (2,2);
\coordinate (v1) at (0,0);
\coordinate (v2) at (5,0);
\coordinate (v3) at (5,1);
%\foreach \point in {v1,v2,v3,v4}
%    \fill [black] (\point) circle (1pt);
\draw (v4) -- (v5);
\draw (v1) -- (v2) -- node[right] {$\eta\sqrt{d}\sigma$} (v3) pic[draw,"$\alpha$",angle radius=2cm, angle eccentricity=.9] {angle=v2--v1--v3};
\draw (v5) -- (v1) -- (v3) pic[draw,"$\beta$",angle radius=1.5cm, angle eccentricity=.8] {angle=v3--v1--v5};
\draw[-{Stealth[scale=1.5]}] (v1) -- node[above] {$\nabla Q(x_t)$} (v3);
\draw[-{Stealth[scale=1.5]}] (v4) -- node[left] {$x_t$} (v1);
\draw (v3) circle (1);
\draw ($(v2)+(-.1,0)$) -- ($(v2)+(-.1,.1)$) -- ($(v2)+(-0,.1)$);
\end{tikzpicture}
\caption{Condition on the angles between $x_t$, $\nabla Q(x_t)$ and $\mathbb{E}\Kr_t$, in the region $\lVert x_t \rVert^2 > D$.}
\label{fig:angleconvexity}
\end{figure}
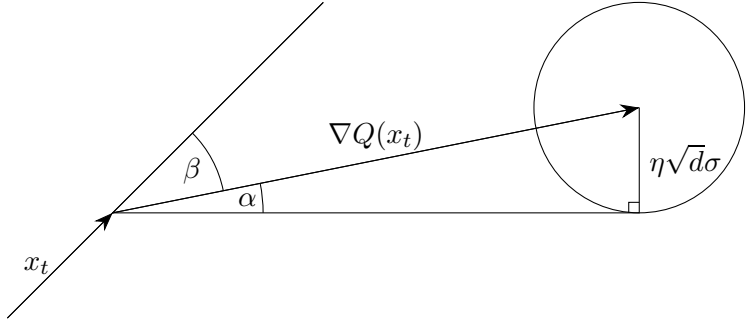

% \begin{proof} (Sketch)
% 27
% \end{proof}

% + adaptation en tenant compte de $f$ estimateurs "perturbants". la définiiton de alpha-f resiliant permet de majorer l'effect de cette pertubation.

% \begin{proposition}
%   Under the previous assumptions, if the choice function of the parameter server satisfies the no-flip property,
%   and at most one worker is Byzantine,
%   then the cost sequence $(Q(x_t))_{t = 1,2,\dots}$ converges, 
%   and the gradient sequence $(\nabla Q(x_t))_{t=1,2,\dots}$ converges towards zero.
% \end{proposition}

Conditions \emph{(i)} to \emph{(iv)} are the same conditions as in the non-convex convergence analysis in~\cite{bottou1998online}.
Condition \emph{(v)} is a slightly stronger condition than the corresponding one in~\cite{bottou1998online},
and states that, beyond a certain horizon, the cost function $Q$ is ``convex enough'', 
in the sense that the direction of the gradient is sufficiently close to the direction of the parameter vector $x$.
Condition \emph{(iv)}, however, states that the gradient estimator used by the correct workers has to be accurate enough, i.e.,
the local standard deviation should be small relatively to the norm of the gradient. 
Of course, the norm of the gradient tends to zero near, e.g., extremal and saddle points.
Actually, 
the ratio $\eta(n,f)\cdot\sqrt{d}\cdot\sigma / \left\lVert \nabla Q \right\rVert$ 
controls the maximum angle between the gradient $\nabla Q$ and the vector chosen by the Krum function.
In the regions where $\left\lVert \nabla Q \right\rVert < \eta(n,f)\cdot\sqrt{d}\cdot\sigma$, the Byzantine workers 
may take advantage of the noise (measured by $\sigma$) in the gradient estimator $G$ to bias the choice of the parameter server.
Therefore, Proposition~\ref{prop:convergence} is to be interpreted as follows: in the presence of Byzantine workers,
the parameter vector $x_t$ almost surely reaches a basin around points 
where the gradient is small ($\left\lVert \nabla Q \right\rVert \le \eta(n,f)\cdot\sqrt{d}\cdot\sigma$), i.e., 
points where the cost landscape is ``almost flat''.

Note that the convergence analysis is based only on the fact that function $\Kr$ is $(\alpha,f)$-Byzantine resilient.
Due to space limitation, the complete proof of Proposition \ref{prop:convergence} is deferred to the Appendix.

\begin{proof}
  For the sake of simplicity, we write $\Kr_t = \Kr(V^t_1,\dots,V^t_n)$.
  Before proving the main claim of the proposition, 
  we first show that the sequence $x_t$ is almost surely globally confined within the region $\lVert x \rVert^2 \le D$.
  
  \paragraph{\it (Global confinement).}
  Let $u_t = \phi( \lVert x_t \rVert^2)$ where
  \begin{equation*}
    \phi(a) = \left\{ \begin{array}{cc}
     0 &  \text{if } a < D\\
     (a-D)^2 & \text{otherwise}
    \end{array}
    \right.
  \end{equation*}
  Note that
  \begin{equation}
    \phi(b) - \phi(a) \le (b-a)\phi'(a) + (b-a)^2.
    \label{eq:inequphi}
  \end{equation}
  This becomes an equality when $a,b \ge D$. Applying this inequality to $u_{t+1} - u_t$ yields
  \begin{align*}
    u_{t+1} - u_t &\le \left( -2 \gamma_t \langle x_t,\Kr_t\rangle + \gamma_t^2 \lVert \Kr_t \rVert^2 \right) \cdot \phi'( \lVert x_t \rVert^2 ) \\
      &+  4\gamma_t^2 \langle x_t,\Kr_t \rangle^2 - 4\gamma_t^3 \langle x_t, \Kr_t \rangle \lVert \Kr_t \rVert^2 
      + \gamma_t^4 \lVert \Kr_t \rVert^4 \\
      &\le -2\gamma_t \langle x_t, \Kr_t\rangle \phi'(\lVert x_t \rVert^2) + \gamma_t^2 \lVert \Kr_t \rVert^2 \phi'(\lVert x_t \rVert^2) \\
      &+ 4\gamma_t^2 \lVert x_t \rVert^2 \lVert \Kr_t \rVert^2 + 4\gamma_t^3\lVert x_t \rVert \lVert \Kr_t \rVert^3 
      + \gamma_t^4 \lVert \Kr_t \rVert^4.
  \end{align*}
  Let $\mathcal{P}_t$ denote the $\sigma$-algebra encoding all the information up to round $t$. Taking the conditional expectation with respect to $\mathcal{P}_t$
  yields
  \begin{align*}
    \mathbb{E}\left( u_{t+1} - u_t | \mathcal{P}_t\right) &\le -2\gamma_t\langle x_t, \mathbb{E}\Kr_t \rangle + \gamma_t^2 \mathbb{E}\left(\lVert \Kr_t \rVert^2\right) \phi'(\lVert x_t\rVert^2) \\
      &+ 4\gamma_t^2 \lVert x_t \rVert^2 \mathbb{E} \left( \lVert \Kr_t \rVert^2 \right) + 4\gamma_t^3 \lVert x_t \rVert \mathbb{E} \left( \lVert \Kr_t \rVert^3\right) 
      + \gamma_t^4 \mathbb{E} \left(\lVert \Kr_t \rVert^4\right).
  \end{align*}
  Thanks to condition \emph{(ii)} of $(\alpha,f)$-Byzantine resilience, and the assumption on the first four moments of $G$, there exist positive constants $A_0,B_0$ such that
  \begin{equation*}
    \mathbb{E} \left( u_{t+1} - u_t | \mathcal{P}_t \right) \le -2 \gamma_t \langle x_t, \mathbb{E}\Kr_t \rangle \phi'(\lVert x_t \rVert^2) + \gamma_t^2 \left(A_0 + B_0 \lVert x_t \rVert^4\right).
  \end{equation*}
  Thus, there exist positive constant $A,B$ such that
  \begin{equation*}
     \mathbb{E} \left( u_{t+1} - u_t | \mathcal{P}_t \right) \le -2 \gamma_t \langle x_t, \mathbb{E}\Kr_t \rangle \phi'(\lVert x_t \rVert^2) + \gamma_t^2 \left(A + B \cdot u_t \right).
  \end{equation*}
  When $\lVert x_t \rVert^2 < D$, the first term of the right hand side is null because $\phi'(\lVert x_t \rVert^2) = 0$.
  When $\lVert x_t \rVert^2 \ge D$, this first term is negative because (see Figure~\ref{fig:angleconvexity})
  \begin{equation*}
    \langle x_t , \mathbb{E}\Kr_t \rangle \ge \lVert x_t \rVert \cdot \lVert \mathbb{E}\Kr_t \rVert \cdot \cos (\alpha + \beta) 
      > 0.
  \end{equation*}
  Hence
  \begin{equation*}
    \mathbb{E} \left( u_{t+1} - u_t | \mathcal{P}_t \right) \le \gamma_t^2 \left(A + B \cdot u_t \right).
  \end{equation*}
  We define two auxiliary sequences
  \begin{align*}
    \mu_t &= \prod_{i=1}^t \frac{1}{1-\gamma_i^2 B} \xrightarrow[t\rightarrow\infty]{} \mu_{\infty} \\
    u'_t &= \mu_t u_t.
  \end{align*}
  Note that the sequence $\mu_t$ converges because $\sum_t \gamma_t^2 < \infty$. Then
  \begin{equation*}
    \mathbb{E} \left( u'_{t+1} - u'_t | \mathcal{P}_t \right) \le \gamma_t^2 \mu_t A.
  \end{equation*}
  Consider the indicator of the positive variations of the left-hand side
  \begin{equation*}
    \chi_t = \left\{ \begin{array}{cc}
     1 & \text{if }  \mathbb{E} \left( u'_{t+1} - u'_t | \mathcal{P}_t \right) > 0\\
     0 & \text{otherwise}
    \end{array}
    \right.
  \end{equation*}
  Then
  \begin{equation*}
    \mathbb{E} \left( \chi_t \cdot (u'_{t+1} - u'_t)\right) \le \mathbb{E} \left( \chi_t \cdot \mathbb{E} \left( u'_{t+1} - u'_t | \mathcal{P}_t \right)\right) \le \gamma^2_t \mu_t A.
  \end{equation*}
  The right-hand side of the previous inequality is the summand of a convergent series.
  By the quasi-martingale convergence theorem~\cite{MetivierSemiMart}, 
  this shows that the sequence $u'_t$ converges almost surely, 
  which in turn shows that the sequence $u_t$ converges almost surely, $u_t \rightarrow u_{\infty} \ge 0$.
  
  Let us assume that $u_{\infty} > 0$. When $t$ is large enough, 
  this implies that $\lVert x_t \rVert^2$ and $\lVert x_{t+1} \rVert^2$ are greater than $D$.
  Inequality~\ref{eq:inequphi} becomes an equality, which implies that the following infinite sum converges almost surely
  \begin{equation*}
    \sum_{t=1}^{\infty} \gamma_t \langle x_t, \mathbb{E} \Kr_t   \rangle \phi'( \lVert x_t \rVert^2 ) < \infty.
  \end{equation*}
  Note that the sequence $\phi'(\lVert x_t \rVert^2)$ converges to a positive value.
  In the region $\lVert x_t \rVert^2 > D$, we have
  \begin{align*}
    \langle x_t, \mathbb{E}\Kr_t \rangle &\ge \sqrt{D} \cdot \left\lVert \mathbb{E}\Kr_t \right\rVert \cdot \cos(\alpha+\beta) \\
      &\ge \sqrt{D} \cdot \left( \left\lVert \nabla Q(x_t) \right\rVert - \eta(n,f)\cdot\sqrt{d}\cdot \sigma(x_t) \right) \cdot \cos(\alpha + \beta) \\
      &\ge \sqrt{D} \cdot \epsilon \cdot (1-\sin \alpha) \cdot \cos(\alpha+\beta) > 0.
  \end{align*}
  This contradicts the fact that $\sum_{t=1}^{\infty} \gamma_t = \infty$.
  Therefore, the sequence $u_t$ converges to zero.
  This convergence implies that the sequence $\lVert x_t \rVert^2$ is bounded, i.e., 
  the vector $x_t$ is confined in a bounded region containing the origin. 
  As a consequence, any continuous function of $x_t$ is also bounded, 
  such as, e.g., $\lVert x_t \rVert^2$, $\mathbb{E} \left\lVert G(x_t,\xi) \right\rVert^2$ 
  and all the derivatives of the cost function $Q(x_t)$.
  In the sequel, positive constants $K_1,K_2,$ etc\dots are introduced whenever such a bound is used.
  
  \paragraph{\it (Convergence).}
  We proceed to show that the gradient $\nabla Q(x_t)$ converges almost surely to zero.
  We define
  \begin{equation*}
    h_t = Q(x_t).
  \end{equation*}
  Using a first-order Taylor expansion and bounding the second derivative with $K_1$, we obtain
  \begin{equation*}
    \left\lvert h_{t+1} - h_t + 2\gamma_t \langle \Kr_t, \nabla Q(x_t) \rangle \right\rvert \le \gamma_t^2 \lVert \Kr_t \rVert^2 K_1 \text{ a.s.}
  \end{equation*}
  Therefore
  \begin{equation}
    \mathbb{E}\left( h_{t+1} - h_t | \mathcal{P}_t\right)
      \le -2 \gamma_t \langle \mathbb{E} \Kr_t, \nabla Q(x_t) \rangle + \gamma_t^2 \mathbb{E} \left( \lVert \Kr_t \rVert^2 | \mathcal{P}_t \right) K_1.
      \label{eq:ineq2}
  \end{equation}
  By the properties of $(\alpha,f)$-Byzantine resiliency, this implies
  \begin{equation*}
    \mathbb{E} \left( h_{t+1} - h_t | \mathcal{P}_t \right) \le \gamma_t^2 K_2K_1,
  \end{equation*}
  which in turn implies that the positive variations of $h_t$ are also bounded
  \begin{equation*}
    \mathbb{E} \left( \chi_t \cdot \left( h_{t+1} - h_t \right)\right) \le \gamma_t^2 K_2K_1.
  \end{equation*}
  The right-hand side is the summand of a convergent infinite sum. By the quasi-martingale convergence theorem,
  the sequence $h_t$ converges almost surely, $Q(x_t) \rightarrow Q_{\infty}$.
  
  Taking the expectation of Inequality~\ref{eq:ineq2}, and summing on $t=1,\dots,\infty$,
  the convergence of $Q(x_t)$ implies that 
  \begin{equation*}
    \sum_{t=1}^{\infty} \gamma_t \langle \mathbb{E} \Kr_t, \nabla Q(x_t) \rangle < \infty \text{ a.s.}
  \end{equation*}
  We now define 
  \begin{equation*}
    \rho_t = \left\lVert \nabla Q(x_t) \right\rVert^2.
  \end{equation*}
  Using a Taylor expansion, as demonstrated for the variations of $h_t$, we obtain
  \begin{equation*}
    \rho_{t+1} - \rho_t \le -2 \gamma_t \langle \Kr_t, \left(\nabla^2 Q(x_t)\right) \cdot \nabla Q (x_t) \rangle + \gamma_t^2 \left\lVert \Kr_t \right\rVert^2 K_3 \text{ a.s.}
  \end{equation*}
  Taking the conditional expectation, and bounding the second derivatives by $K_4$,
  \begin{equation*}
    \mathbb{E} \left( \rho_{t+1} - \rho_t | \mathcal{P}_t \right) \le 2 \gamma_t \langle \mathbb{E}  \Kr_t, \nabla Q(x_t) \rangle K_4 + \gamma_t^2 K_2K_3.
  \end{equation*}
  The positive expected variations of $\rho_t$ are bounded
  \begin{equation*}
  \mathbb{E} \left(\chi_t\cdot \left(\rho_{t+1} - \rho_t\right)\right) \le 2 \gamma_t \mathbb{E} \langle \mathbb{E}  \Kr_t, \nabla Q(x_t) \rangle K_4 + \gamma_t^2 K_2K_3.
  \end{equation*}
  The two terms on the right-hand side are the summands of convergent infinite series.
  By the quasi-martingale convergence theorem, this shows that $\rho_t$ converges almost surely.
  
  We have
  \begin{align*}
    \langle \mathbb{E} \Kr_t, \nabla Q(x_t) \rangle &\ge \left(\left\lVert \nabla Q(x_t) \right\rVert - \eta(n,f)\cdot\sqrt{d}\cdot\sigma(x_t) \right) \cdot \left\lVert \nabla Q(x_t) \right\rVert \\
     &\ge \underbrace{(1-\sin\alpha)}_{>0} \cdot \rho_t.
  \end{align*}
  This implies that the following infinite series converge almost surely
  \begin{equation*}
    \sum_{t=1}^{\infty} \gamma_t \cdot \rho_t < \infty.
  \end{equation*}
  Since $\rho_t$ converges almost surely, and the series $\sum_{t=1}^{\infty} \gamma_t = \infty$ diverges,
  we conclude that the sequence $\lVert \nabla Q(x_t) \rVert$ converges almost surely to zero.
\end{proof}

\section{$m$-Krum}
\label{sec:multikrum}
 
So far, for the sake of simplicity, we defined our Krum function so that it selects only one vector among the $n$ vectors proposed. In fact,
the parameter server could avoid wasting the contribution of the other workers by selecting $m$ vectors instead.
This can be achieved, for instance, by selecting one vector using the Krum function, removing it from the list, 
and iterating this scheme $m-1$ times, as long as $n-m > 2f+2$.
We then define accordingly the $m$-Krum function
\begin{equation*}
  \Kr_m(V_1,\dots,V_n) = \frac{1}{m} \sum_{s=1}^m V_{i^s_*}
\end{equation*}
where the $V_{i^s_*}$'s are the $m$ vectors selected as explained above. Note that the $1$-Krum function is the Krum function defined in Section~\ref{sec:krum}.

\begin{proposition}
  Let $V_1,\dots,V_n$ be any iid random $d$-dimensional vectors, $V_i \sim G$, with $\mathbb{E}G = g$ 
  and $\mathbb{E}\left\lVert G - g \right\rVert^2 = d\sigma^2$. 
  Let $B_1,\dots,B_f$ be any $f$ random vectors, possibly dependent on the $V_i$'s.
  Assume that $2f + 2 < n - m$  and $\eta(n,f)\sqrt{d}\cdot \sigma < \lVert g \rVert$.
  Then, for large enough $n$, the $m$-Krum function $\Kr_m$ is $(\alpha,f)$-Byzantine resilient where $0 \le \alpha < \pi/2$ is defined by
  \begin{equation*}
    \sin \alpha = \frac{ \eta(n,f) \cdot \sqrt{d} \cdot \sigma }{\lVert g \rVert}.
  \end{equation*}
  \label{prop:multikrumresilient}
\end{proposition}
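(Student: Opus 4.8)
The plan is to reduce the claim to the single-vector analysis of Proposition~\ref{prop:krumresilient}, applied separately to each of the $m$ successive Krum selections, and then to combine the $m$ resulting bounds through the average that defines $\Kr_m$. Write $\Kr_m = \frac{1}{m}\sum_{s=1}^m V_{i^s_*}$, where $V_{i^s_*}$ is the vector returned by the $s$-th call to $\Kr$, applied to the sublist remaining after the previous $s-1$ removals. At step $s$ this sublist has $n_s = n - (s-1)$ entries, of which at most $f$ are Byzantine; the hypothesis $2f+2 < n-m$ guarantees $2f+2 < n-m+1 \le n_s$ for every $s \le m$, so Proposition~\ref{prop:krumresilient} is applicable at each step.

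For condition \emph{(i)}, I would use linearity of expectation and the triangle inequality to get $\lVert \mathbb{E}\Kr_m - g\rVert \le \frac{1}{m}\sum_{s=1}^m \lVert \mathbb{E}V_{i^s_*} - g\rVert$, and then bound each summand by the deviation estimate established inside the proof of Proposition~\ref{prop:krumresilient}, namely $\lVert \mathbb{E}V_{i^s_*} - g\rVert \le \eta(n_s,f)\sqrt{d}\,\sigma$. The delicate point is that the removals depend on the realized values, so the correct vectors surviving at step $s$ are no longer exactly i.i.d.; however, the single-step bound only ever invokes pathwise inequalities together with estimates of the form $\mathbb{E}\lVert V_i - V_j\rVert^2\,\mathbb{I}(\cdots) \le \mathbb{E}\lVert V_i - V_j\rVert^2 = 2d\sigma^2$, where the indicator encodes the outcome of the selection. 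Since these expectations are taken over the joint law of the full i.i.d.\ family and the integrand is nonnegative, the estimate survives the conditioning, and the per-step bound goes through unchanged.

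It then remains to collapse the $m$ factors $\eta(n_s,f)$ into the single quantity $\eta(n,f)$ of the statement. This is precisely where the hypothesis ``for large enough $n$'' enters: viewed as a function of its first argument with $f$ fixed, $\eta(\cdot,f)$ blows up both as $n_s\downarrow 2f+2$ and as $n_s\to\infty$ (where $\eta(n_s,f)\sim\sqrt{2n_s}$), so it is eventually increasing on the right branch. For $n$ large the entire range $n-m+1,\dots,n$ lies on that increasing branch, hence $\eta(n_s,f)\le\eta(n,f)$ for every $s$. Averaging yields $\lVert\mathbb{E}\Kr_m - g\rVert \le \eta(n,f)\sqrt{d}\,\sigma$, i.e.\ $\mathbb{E}\Kr_m$ lies in the ball of radius $\eta(n,f)\sqrt{d}\,\sigma$ about $g$, and the scalar-product bound $\langle \mathbb{E}\Kr_m, g\rangle \ge (1-\sin\alpha)\lVert g\rVert^2$ follows verbatim as in Proposition~\ref{prop:krumresilient}.

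For condition \emph{(ii)}, the triangle inequality and convexity of $t\mapsto t^r$ (power-mean) give $\lVert\Kr_m\rVert^r \le \left(\frac{1}{m}\sum_s \lVert V_{i^s_*}\rVert\right)^r \le \frac{1}{m}\sum_s \lVert V_{i^s_*}\rVert^r$, so that $\mathbb{E}\lVert\Kr_m\rVert^r \le \frac{1}{m}\sum_s \mathbb{E}\lVert V_{i^s_*}\rVert^r$. Each term is bounded, by condition \emph{(ii)} of Proposition~\ref{prop:krumresilient} at step $s$, by a linear combination of products $\mathbb{E}\lVert G\rVert^{r_1}\cdots\mathbb{E}\lVert G\rVert^{r_{n_s}}$ with $r_1+\cdots+r_{n_s}=r$; a finite sum of such linear combinations is again one, which establishes \emph{(ii)}. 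The main obstacle, and the step I would justify most carefully, is the transfer of the single-Krum estimates to the dependent sublists produced by the iterative removal: making rigorous that the ``indicator against a nonnegative integrand'' argument genuinely neutralizes the loss of independence at every step, and that the Byzantine count present in each sublist never exceeds $f$ so that the applicable $\eta(n_s,f)$ really dominates.
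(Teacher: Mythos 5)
Your proposal follows essentially the same route as the paper's own (sketch of a) proof: apply Proposition~\ref{prop:krumresilient} to each of the $m$ successive selections with $n$ replaced by $n-(s-1)$ (legitimate since $2f+2<n-m$), use that $\eta(\cdot,f)$ is eventually increasing to dominate each $\eta(n_s,f)$ by $\eta(n,f)$, average to get condition \emph{(i)}, and bound the moments of the average by a linear combination of the per-step moment bounds for condition \emph{(ii)}. The only difference is that you explicitly flag (and argue around) the loss of independence caused by the data-dependent removals, a subtlety the paper's sketch passes over in silence.
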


\begin{proof}[Proof (Sketch)]  For large enough $n$, function $\eta(n,f)$ is increasing in the variable $n$. 
In particular, for all $i \in [0, m-1]$, $\eta(n-i,f) \le \eta(n,f)$. 
For each iteration $i$ from $0$ to $m-1$, Proposition \ref{prop:krumresilient} holds (replacing $n$ by $n-i$ since $n-m > 2f+2$) 
and guarantees that each vector among the $m$ vectors falls under the definition of $(\alpha_i, f)$-Byzantine-resilience, 
where $1-\sin(\alpha_i) = 1- \frac{ \eta(n-i,f) \cdot \sqrt{d} \cdot \sigma }{\lVert g \rVert} \geq 1 - \sin \alpha$. 
Then, $\langle \mathbb{E}\Kr_m, g \rangle \ge \frac{1}{m} \sum\limits_{i=0}^{m-1} (1- \sin\alpha_i) \cdot \lVert g \rVert^2  
        \ge (1- \sin\alpha) \cdot \lVert g \rVert^2$. 
The moments are bounded  above by a linear combination of the  upper-bounds on the moments of each of the $m$ vectors. 
% Convergence follows from similar observations on $\eta(n-i,f)$ and from the $(\alpha,f)$-Byzantine-resilience of $\Kr_m$.
\end{proof}

% With multi-Krum, we modified our choice function to select the work of $m$ workers, instead of a single worker, as long as $n-m>2f+2$. This makes this new protocol more practical when $f$ can be constant or smaller than $\frac{n-2}{2}$: this leverages the contribution of more workers each time the parameter servers performs the SGD update.

 %
 % refaire en parlant plutot de n larger than 2f+2 et envoyer en conclusion
 %
 % , when all the vectors involved in the computation of multi-Krum are correct, this leads to an estimator of the gradient with a smaller standard deviation ($\sigma/\sqrt{m}$ instead of $\sigma$),  

% %%  Note that the additional cost of multi-Krum is multiplying the overall complexity by $m$.

%\sin \alpha = \frac{}{}.

% $\langle \mathbb{E}F, g \rangle \ge (1- \sin\alpha) \cdot \lVert g \rVert^2 > 0$ and

\section{Concluding Remarks} 
\label{sec:conclusion}

% This paper studies the problem of distributed machine learning with a focus on the core of most learning algorithms, 
% namely those based on stochastic gradient descent.
% RGRG Attention parfois SGD en majuscule et parfois en minuscule 
% In the presence of Byzantine workers, the main difficulty for the parameter server is to compute a relatively good estimate of the real gradient.
% We formulate a resilience property that formally captures the requirements of this task.
% We introduce the Krum function and prove that it satisfies this resilience property.
% We finally show how this property ensures the convergence of the learning algorithm.

% We prove that a parameter server can, using our robust Krum function, know which vector to select, 
% among the ones proposed by the workers, without being fooled by a potential Byzantine worker. 

At first glance, the Byzantine-resilient machine problem we address in this paper 
can be related to multi-dimensional approximate agreement~\cite{mendes2013multidimensional, herlihy2014computing}. 
Yet, results 
in $d$-dimensional approximate agreement cannot be applied in our context for the following reasons: 
\emph{(a)}~\cite{mendes2013multidimensional, herlihy2014computing}  assume that the set of vectors that can be proposed to an instance of the agreement is bounded so that at least $f+1$ correct workers propose the same vector, which would require a lot of redundant work in our setting; 
and most importantly, \emph{(b)}~\cite{mendes2013multidimensional} requires a local computation by each worker that is in $O(n^d)$. 
While this cost seems reasonable for small dimensions, such as, e.g., mobile robots meeting in a $2D$ or $3D$ space,
it becomes a real issue in the context of machine learning, 
where $d$ may be as high as $160$ billion~\cite{trask2015modeling} (making $d$ a crucial parameter when considering complexities, either for local computations, or for communication rounds).
% the growing complexity of 
% machine learning models goes hand in hand with a growth of the dimension of the parameter vector. 
% In modern implementation of artificial neural networks, $d$ can reach values as high as 160 billion~\cite{trask2015modeling}, 
%
In our case, the complexity of the Krum function  is $O(n^2 \cdot ( d + \log n))$.

% Related work (defer to the end)

A closer  approach to ours has been recently proposed in~\cite{su2016fault, su2016non}. 
In~\cite{su2016fault}, the authors assume a bounded gradient, and their work was an important step towards Byzantine-tolerant machine learning. 
However, their study only deals with parameter vectors of dimension one.  
In~\cite{su2016non} the authors tackle a multi-dimensional situation, using an iterated approximate Byzantine agreement that reaches consensus asymptotically.  This is however only achieved on a finite set of possible environmental states and cannot be used in the continuous context of stochastic gradient descent.

The present work offers many possible extensions. 
First, the question of whether the bound $2f+2 < n$ is tight remains open, so is the question 
on how to tolerate both asynchrony and Byzantine workers.
Second, we have shown that our scheme forces the parameter vector to reach a region 
where the gradient is small relatively to $\eta \cdot \sqrt{d} \cdot \sigma$. 
The question of whether the factor $\eta(n,f) = O(n)$ can be made smaller also remains open. 
Third, the $m$-Krum function iterates the $1$-Krum function $m$ times, multiplying by $m$ the overall computation complexity.
An alternative is to select the first $m$ vectors after computing the score as in the Krum function.
Proving the $(\alpha,f)$-Byzantine-resilience of this alternative remains open. 
%Moreover, we have focused on a setting where a single parameter server coordinates the contributions of the workers.
%The parameter server could be implemented in a distributed manner, 
%e.g., using a state machine replication protocol, to avoid the presence of such a single point of failure.

\paragraph{Acknowledgment.}

The authors would like to thank to Lê Nguyen Hoang for fruitful discussion and inputs.

\clearpage
\pagenumbering{gobble}
%\bibliographystyle{abbrv}

%\bibliography{thebibliography}

\begin{thebibliography}{10}

\bibitem{abadi2016tensorflow}
M.~Abadi, P.~Barham, J.~Chen, Z.~Chen, A.~Davis, J.~Dean, M.~Devin,
  S.~Ghemawat, G.~Irving, M.~Isard, et~al.
\newblock Tensorflow: A system for large-scale machine learning.
\newblock In {\em Proceedings of the 12th USENIX Symposium on Operating Systems
  Design and Implementation (OSDI). Savannah, Georgia, USA}, 2016.

\bibitem{bottou1998online}
L.~Bottou.
\newblock Online learning and stochastic approximations.
\newblock {\em Online learning in neural networks}, 17(9):142, 1998.

\bibitem{bottou2010large}
L.~Bottou.
\newblock Large-scale machine learning with stochastic gradient descent.
\newblock In {\em Proceedings of COMPSTAT'2010}, pages 177--186. Springer,
  2010.

\bibitem{dean2012large}
J.~Dean, G.~Corrado, R.~Monga, K.~Chen, M.~Devin, M.~Mao, A.~Senior, P.~Tucker,
  K.~Yang, Q.~V. Le, et~al.
\newblock Large scale distributed deep networks.
\newblock In {\em Advances in neural information processing systems}, pages
  1223--1231, 2012.

\bibitem{gemulla2011large}
R.~Gemulla, E.~Nijkamp, P.~J. Haas, and Y.~Sismanis.
\newblock Large-scale matrix factorization with distributed stochastic gradient
  descent.
\newblock In {\em Proceedings of the 17th ACM SIGKDD international conference
  on Knowledge discovery and data mining}, pages 69--77. ACM, 2011.

\bibitem{haykin2009neural}
S.~S. Haykin.
\newblock {\em Neural networks and learning machines}, volume~3.
\newblock Pearson Upper Saddle River, NJ, USA:, 2009.

\bibitem{herlihy2014computing}
M.~Herlihy, S.~Rajsbaum, M.~Raynal, and J.~Stainer.
\newblock Computing in the presence of concurrent solo executions.
\newblock In {\em Latin American Symposium on Theoretical Informatics}, pages
  214--225. Springer, 2014.

\bibitem{jordan2015machine}
M.~Jordan and T.~Mitchell.
\newblock Machine learning: Trends, perspectives, and prospects.
\newblock {\em Science}, 349(6245):255--260, 2015.

\bibitem{lamport1982Byzantine}
L.~Lamport, R.~Shostak, and M.~Pease.
\newblock The byzantine generals problem.
\newblock {\em ACM Transactions on Programming Languages and Systems (TOPLAS)},
  4(3):382--401, 1982.

\bibitem{lian2015asynchronous}
X.~Lian, Y.~Huang, Y.~Li, and J.~Liu.
\newblock Asynchronous parallel stochastic gradient for nonconvex optimization.
\newblock In {\em Advances in Neural Information Processing Systems}, pages
  2737--2745, 2015.

\bibitem{lynch1996distributed}
N.~A. Lynch.
\newblock {\em Distributed algorithms}.
\newblock Morgan Kaufmann, 1996.

\bibitem{markoff2012many}
J.~Markoff.
\newblock How many computers to identify a cat? 16,000.
\newblock {\em New York Times}, pages 06--25, 2012.

\bibitem{mendes2013multidimensional}
H.~Mendes and M.~Herlihy.
\newblock Multidimensional approximate agreement in byzantine asynchronous
  systems.
\newblock In {\em Proceedings of the forty-fifth annual ACM symposium on Theory
  of computing}, pages 391--400. ACM, 2013.

\bibitem{MetivierSemiMart}
M.~Métivier.
\newblock {\em Semi-Martingales}.
\newblock Walter de Gruyter, 1983.

\bibitem{forbes2017}
D.~Newman.
\newblock {Forbes: The World's Largest Tech Companies Are Making Massive AI
  Investments}.
\newblock \url{https://goo.gl/7yQXni}, 2017.
\newblock [Online; accessed 07-February-2017].

\bibitem{polyak1992acceleration}
B.~T. Polyak and A.~B. Juditsky.
\newblock Acceleration of stochastic approximation by averaging.
\newblock {\em SIAM Journal on Control and Optimization}, 30(4):838--855, 1992.

\bibitem{samuel1959some}
A.~L. Samuel.
\newblock Some studies in machine learning using the game of checkers.
\newblock {\em IBM Journal of research and development}, 3(3):210--229, 1959.

\bibitem{schneider1990implementing}
F.~B. Schneider.
\newblock Implementing fault-tolerant services using the state machine
  approach: A tutorial.
\newblock {\em ACM Computing Surveys (CSUR)}, 22(4):299--319, 1990.

\bibitem{srivastava2015training}
R.~K. Srivastava, K.~Greff, and J.~Schmidhuber.
\newblock Training very deep networks.
\newblock In {\em Advances in neural information processing systems}, pages
  2377--2385, 2015.

\bibitem{su2016fault}
L.~Su and N.~H. Vaidya.
\newblock Fault-tolerant multi-agent optimization: optimal iterative
  distributed algorithms.
\newblock In {\em Proceedings of the 2016 ACM Symposium on Principles of
  Distributed Computing}, pages 425--434. ACM, 2016.

\bibitem{su2016non}
L.~Su and N.~H. Vaidya.
\newblock Non-bayesian learning in the presence of byzantine agents.
\newblock In {\em International Symposium on Distributed Computing}, pages
  414--427. Springer, 2016.

\bibitem{trask2015modeling}
A.~Trask, D.~Gilmore, and M.~Russell.
\newblock Modeling order in neural word embeddings at scale.
\newblock In {\em ICML}, pages 2266--2275, 2015.

\bibitem{tsitsiklis1986distributed}
J.~Tsitsiklis, D.~Bertsekas, and M.~Athans.
\newblock Distributed asynchronous deterministic and stochastic gradient
  optimization algorithms.
\newblock {\em IEEE transactions on automatic control}, 31(9):803--812, 1986.

\bibitem{zhang2015deep}
S.~Zhang, A.~E. Choromanska, and Y.~LeCun.
\newblock Deep learning with elastic averaging sgd.
\newblock In {\em Advances in Neural Information Processing Systems}, pages
  685--693, 2015.

\bibitem{zhang2004solving}
T.~Zhang.
\newblock Solving large scale linear prediction problems using stochastic
  gradient descent algorithms.
\newblock In {\em Proceedings of the twenty-first international conference on
  Machine learning}, page 116. ACM, 2004.

\end{thebibliography}

\clearpage
\pagenumbering{roman}

\end{document}